\documentclass{ifacconf}
\usepackage{stfloats}
\usepackage{graphicx}      
\usepackage{natbib}        
\usepackage{pdfsync}
\usepackage{epsfig} 
\usepackage{cuted}
\usepackage{lipsum}
\usepackage{mathtools}
\usepackage{amsmath,amssymb,amsfonts}
\usepackage{mathrsfs}
\usepackage{enumerate}
\usepackage{color}
\usepackage{subfigure}
\usepackage{comment}
\usepackage{algorithm}
\usepackage{flushend}
\usepackage{algpseudocode}

\newtheorem{mythm}{Theorem}

\newtheorem{mylem}{Lemma}

\newtheorem{myexm}{Example}
\newtheorem{remark}{Remark}
\begin{document}
\begin{frontmatter}

\title{Data-Driven Min-Max MPC with Integral Quadratic Constraints\thanksref{footnoteinfo}}

\thanks[footnoteinfo]{F. Allg\"{o}wer is thankful that his work was funded by Deutsche Forschungsgemeinschaft (DFG, German Research
Foundation) under Germany’s Excellence Strategy - EXC 2075 - 390740016 and under grant 468094890.
The authors thank the International Max Planck Research School
for Intelligent Systems (IMPRS-IS) for supporting Yifan Xie.}

\author[First]{Yifan Xie}
\author[First]{Julian Berberich}
\author[First]{Frank Allg\"{o}wer}

\address[First]{University of Stuttgart, Institute for Systems Theory and Automatic Control, 70550 Stuttgart, Germany (e-mail: \{yifan.xie, julian.berberich, frank.allgower\}@ist.uni-stuttgart.de).}

\begin{abstract}    
Data-driven control of nonlinear systems with rigorous guarantees is a challenging control problem.
Integral quadratic constraints (IQCs) provide a powerful framework for modeling nonlinearities.
This paper presents a data-driven min-max model predictive control (MPC) synthesis method for unknown systems subject to (nonlinear) uncertainties using the IQC framework.
The unknown system matrices are characterized by a set-membership representation using the input-state data and the knowledge of the IQCs.
We derive two semidefinite programs (SDPs) that minimize an upper bound on the worst-case cost over all possible system dynamics and uncertainties.
By iteratively solving these SDPs, the proposed state-feedback control law is obtained.
We further prove that the resulting closed-loop system is exponentially stable and satisfies the input and state constraints.
A numerical example demonstrates the validity of
the proposed method.
\end{abstract}
\begin{keyword}
Data-driven control, integral quadratic constraints, model predictive control.
\end{keyword}

\end{frontmatter}

\section{Introduction}\label{sec1}
Classical controller design techniques often require a precise system model. 
However, first-principles models can be difficult to obtain and usually require expert knowledge.
Data-driven control is becoming more and more popular recently, as it relies on available data rather than explicit system models for controller synthesis.
Within the data-driven control literature, several frameworks have been proposed for controller design, e.g., data informativity framework \citep{van2023informativity} and data-driven model predictive control (MPC) based on Willems' Fundamental Lemma \citep{markovsky2021review, markovsky2023datapower}.
The data informativity framework assumes that the noise satisfies energy bounds or instantaneous bounds and proposes a quadratic matrix inequality to characterize a set of linear-time-invariant (LTI) systems that are consistent with the data.
The controller design then aims to synthesize a controller that stabilizes all systems that are consistent with the data.
Although this framework is initially designed for LTI systems, it has been extended to linear parameter varying systems \citep{verhoek2024decoupling,xie2024dataLPV}, polynomial systems \citep{martin2023guarantees}, and bilinear systems \citep{xie2025bilinear}.
However, data-driven controller synthesis for systems with nonlinear uncertainties remains largely unexplored within this framework.

The integral quadratic constraints (IQC) framework provides a powerful tool for modeling structured uncertainties and
nonlinearities \citep{veenman2016robust}.
The IQC framework is shown to represent a wide range of uncertainties, including time-invariant and time-varying parametric uncertainties, norm-bounded nonlinearities, slope-restricted nonlinearities, time-delay uncertainties, and others.
It enables an efficient analysis and controller synthesis of robust stability and performance for uncertain dynamical systems based on linear matrix inequality (LMI) optimization. 
Robust controller synthesis methods based on IQCs for discrete-time systems have been studied by \cite{veenman2014iqc,hu2017robustness, fry2017iqc}.
These works all assume that the system model is known and that the uncertainties satisfy prescribed IQCs.
Data-driven system analysis problems such as controllability, dissipativity, and stabilizability are discussed by \cite{markovsky2021review}.
Data-driven controller synthesis is studied by \cite{GUPTA2026112608}, which develops a data-driven controller synthesis approach for generalized multi-input multi-output systems under IQC-modeled perturbations using only frequency-domain measurement data.
\cite{xie2024dataLPV, berberich2023combining} propose data-driven controller design methods but restrict attention to multiplicative nonlinearities and impose pointwise constraints on the uncertainties.
A data-driven controller is designed for nonlinear systems using input–state data by \cite{luppi2022nonlinear}; however, it requires the nonlinearity to be measurable.

MPC is an advanced control technique that can handle nonlinear systems and input/state constraints.
The LMI-based min-max MPC scheme designs control inputs to minimize the worst-case cost over uncertainties to robustly stabilize the system \citep{kothare1996robust}.
Recently, data-driven MPC approaches have been studied,
which directly use the measured input–output data to predict the future behaviors based on Willems' Fundamental Lemma \citep{coulson2019data,berberich2021guarantees,markovsky2023datapower}.
Within the data informativity framework, data-driven min–max MPC schemes have been proposed to account for performance objectives and constraints using LMI-based methods: for LTI systems by \cite{nguyen2023lmirobust,xie2024minmax,xie2024minmaxrobust}, for LPV systems by \cite{xie2024dataLPV}, and for bilinear systems by \cite{xie2025bilinear}.
A min–max robust formulation of data-driven MPC is proposed by \cite{wang2025min} to optimize worst-case performance over the uncertainty set by analyzing non-unique solutions to the behavioral representation.

In this paper, we consider an unknown discrete-time uncertain system described by the interconnection of a nominal LTI system and an uncertainty satisfying a known IQC.
We propose a data-driven min-max MPC scheme using only input-state data.
An algorithm is developed that iteratively solves two SDPs and obtains an upper bound on the data-driven min-max MPC cost and the corresponding state-feedback gain.
A receding-horizon algorithm is used to improve the performance by repeatedly applying the iterative algorithm online.
The proposed scheme guarantees exponential stability of the closed-loop system and satisfaction of both input and state constraints. 
A numerical example further demonstrates the effectiveness of the proposed method. 
In contrast to existing data-driven control approaches based on IQCs, such as \cite{GUPTA2026112608}, our method uses time-domain data and accounts for input and state constraints.

The remainder of this paper is organized as follows. Section~\ref{sec2} introduces the system dynamics and states the problem setup.
In Section~\ref{sec3}, we present the data-driven characterization using the input-state data and the IQCs.
We propose the data-driven min-max MPC scheme with IQCs and prove the closed-loop guarantees.
Section~\ref{sec4} demonstrates the implementation of the proposed method through a numerical example.
Finally, Section~\ref{sec5} concludes the paper and outlines future research directions.

\emph{Notations: }
For matrices $A, B$ with compatible dimensions, we abbreviate $A^\top BA$ to $[\star]^\top B A$.
If a matrix $P$ is positive (semi-)definite, we write $P\succ 0$($P\succeq 0$).
For a vector $x$ and a matrix $P\succ 0$, we write $\|x\|_P=\sqrt{x^\top P x}$.

\section{Problem Setup}\label{sec2}




This paper considers a discrete-time uncertain system described by the interconnection of a nominal linear time-invariant (LTI) system and a (possibly nonlinear) uncertainty $\Delta$.
The state-space model of the system is described as follows:
\begin{equation}\label{system}
\begin{aligned}
    x_{t+1}&=A_s x_t+B_s u_t+B_\omega \omega_t,\\
    z_t&=Cx_t+Du_t,\\
    \omega_t&=\Delta(z_t),
\end{aligned}
\end{equation}
where $x_t\in\mathbb{R}^{n_x}$ is the state,  $u_t\in\mathbb{R}^{n_u}$ is the input, $\omega_t\in\mathbb{R}$ is the uncertainty, and $\Delta: \mathbb{R}\rightarrow \mathbb{R}$ is a potentially nonlinear operator.
We assume that the matrices $C\in\mathbb{R}^{1\times n_x}$, $D\in\mathbb{R}^{1\times n_u}$ and $B_\omega\in\mathbb{R}^{n_x}$ are known, while $\Delta$ is assumed to be unknown.
The knowledge of $C, D, B_\omega$ means that we have some information about which parts of dynamics are affected by $\Delta$, although the explicit expression of the operator 
$\Delta$ is not available. Moreover, the nominal LTI system is unknown, i.e., $A_s\in\mathbb{R}^{n_x\times n_x}$,  $B_s\in\mathbb{R}^{n_x\times n_u}$ are not known.

\begin{remark}
In the existing literature, \cite{xie2024dataLPV, berberich2023combining} consider multiplicative nonlinearity $\omega_t=\Delta_t z_t$ and impose a pointwise quadratic constraint on $\Delta_t$.
\cite{luppi2022nonlinear} assumes that the nonlinearity is measurable.
In this paper, we generalize to
more general nonlinearities and do not require the nonlinearity to be known precisely.
\end{remark}

We assume that the operator $\Delta$ satisfies the following time-domain integral quadratic constraint (IQC).

\begin{assum}\label{assumption1}
For all signals $\{z_t, \omega_t\}_{t=0}^{T-1}$ with $\omega_t=\Delta(z_t)$ and all $T\in\mathbb{N}$, it holds that 
\begin{equation}\label{assumption1_equation}
\sum_{t=0}^{T-1} \begin{bmatrix}\omega_t &z_t\end{bmatrix}\Pi \begin{bmatrix}\omega_t &z_t\end{bmatrix}^\top\geq 0,
\end{equation}
where $\Pi\in\boldsymbol\Pi$ for a known set $\boldsymbol\Pi$. We denote the set of operators $\Delta$ satisfying \eqref{assumption1_equation} by $\boldsymbol\Delta$.
\end{assum}

\begin{myexm}
Consider the thermal regulation dynamics of a building modeled as $x_{t+1}=A_sx_t+B_su_t+\omega_t$, where $x_t$ is the temperature deviation from a desired setpoint, $u_t$ is the heat flux, and $\omega_t=\Delta(x_t)$ is a (potentially nonlinear) uncertainty.
The uncertainty $\Delta$ is not known but is constrained by a sector, i.e.,  $(\Delta(x)-\alpha x)(\beta x -\Delta(x))\geq 0$. Then \eqref{assumption1_equation} holds with 
\begin{equation}\label{piuncertainty}
\boldsymbol{\Pi}=\left\{\begin{bmatrix}
   1  &-\alpha\\
   -1 &\beta
\end{bmatrix}^\top\!\!\begin{bmatrix}
    0 &G_{12}\\
    G_{12}^\top &0\end{bmatrix}\!\!\begin{bmatrix}
   1  &-\alpha\\
   -1 &\beta
\end{bmatrix}:G_{12}\geq 0\right\}.
\end{equation}
\end{myexm}




We assume that a sequence of data measurements $(U, X, Z)$ of length $T_f$ from the system \eqref{system} is available.
The available data are denoted as follows
\begin{equation}\label{data}
\begin{aligned}
U\!\!=\!\!\begin{bmatrix}u_0^d \!&\ldots \!&u_{T_f-1}^d\end{bmatrix},
X\!\!=\!\!\begin{bmatrix}x_0^d \!&\ldots \!&x_{T_f}^d\end{bmatrix},
Z\!\!=\!\!\begin{bmatrix}z_0^d \!&\ldots \!&z_{T_f-1}^d\end{bmatrix}.
\end{aligned}
\end{equation}

Our goal is to design a time-varying state-feedback controller $u_t=F_tx_t$ with $F_t\in\mathbb{R}^{n_x\times n_u}$ that stabilizes the origin for the closed-loop system using the data \eqref{data}.
In order to evaluate the closed-loop performance, we define the stage cost function as $\ell(x, u)=\|x\|_Q^2+\|u\|_R^2$, where $Q, R\succeq 0$.
The considered input and state constraints are formulated as $\|x\|_{S_x}^2\leq 1$ and $\|u\|_{S_u}^2\leq 1$, where $S_u\succ 0, S_x\succeq 0$.

\section{Data-driven min-max MPC with IQCs}\label{sec3}

In Section~\ref{sec3.1}, we present a data-driven system characterization of the unknown system matrices $(A_s, B_s)$.
Then, we propose the data-driven min-max MPC scheme with IQCs in Section~\ref{sec3.2}.
Closed-loop guarantees of the proposed scheme are shown in Section~\ref{sec3.3}.

\subsection{Data-driven system characterization}\label{sec3.1}

We define the set of $(A, B)$ that are consistent with the data \eqref{data} and the knowledge on the operator $\Delta$ as
\begin{equation}
    \Sigma\!=\!\!\left\{\!(A, B):\!\!
    \begin{gathered}
        x_{i+1}^d=Ax_i^d+Bu_i^d+B_\omega\Delta(z_i^d), 
        \\ \text{holds for some }\Delta\in\boldsymbol{\Delta}, \forall i\in\mathbb{I}_{[0, T_f-1]} 
    \end{gathered}\right\}.
\end{equation}

In the following lemma, we propose a data-driven characterization of the set $\Sigma$ using the data \eqref{data} and Assumption~\ref{assumption1}.

\begin{mylem}\label{lemma1}
Suppose Assumption~\ref{assumption1} holds and $B_\omega\neq 0$. Then, the set $\Sigma$ is equal to
\begin{equation}\label{consistent_set}
\left\{
(A, B):
\begin{gathered}
    \begin{bmatrix}
        I
        &A
        &B
    \end{bmatrix}
    S(\tilde{\Pi})
    \begin{bmatrix}
        I
        &A
        &B
    \end{bmatrix}^\top\succeq 0, \forall \tilde{\Pi}\in\boldsymbol{\Pi}
\end{gathered}
\!\right\}
\end{equation}
with \begin{equation}
    S(\tilde{\Pi})=\sum_{i=0}^{T_f-1}\!
    \begin{bmatrix}
        x_{i+1}^d &B_\omega z_i^d\\
        -x_i^d &0\\
        -u_i^d &0
    \end{bmatrix}\tilde{\Pi}
    \begin{bmatrix}
        x_{i+1}^d &B_\omega z_i^d\\
        -x_i^d &0\\
        -u_i^d &0    \end{bmatrix}^\top.
\end{equation} 
\end{mylem}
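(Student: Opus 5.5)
The plan is to reduce both inclusions to the residuals of a candidate pair $(A,B)$. For any $(A,B)$, define $r_i:=x_{i+1}^d-Ax_i^d-Bu_i^d$ for $i\in\mathbb{I}_{[0,T_f-1]}$. Multiplying out the block columns gives
\begin{equation*}
\begin{bmatrix} I & A & B\end{bmatrix}\begin{bmatrix} x_{i+1}^d & B_\omega z_i^d\\ -x_i^d & 0\\ -u_i^d & 0\end{bmatrix}=\begin{bmatrix} r_i & B_\omega z_i^d\end{bmatrix}.
\end{equation*}
Hence the matrix in \eqref{consistent_set} equals $M(\tilde\Pi):=\sum_{i}\begin{bmatrix} r_i & B_\omega z_i^d\end{bmatrix}\tilde\Pi\begin{bmatrix} r_i & B_\omega z_i^d\end{bmatrix}^\top$. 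Membership in $\Sigma$ means exactly that $r_i=B_\omega\omega_i$ with $\omega_i=\Delta(z_i^d)$ for some $\Delta\in\boldsymbol\Delta$. Everything then follows from this identity.

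\emph{Inclusion $\Sigma\subseteq$ \eqref{consistent_set}.} If $(A,B)\in\Sigma$, then $\begin{bmatrix} r_i & B_\omega z_i^d\end{bmatrix}=B_\omega\begin{bmatrix}\omega_i & z_i^d\end{bmatrix}$, and so
\begin{equation*}
M(\tilde\Pi)=B_\omega\Big(\sum_i \begin{bmatrix}\omega_i & z_i^d\end{bmatrix}\tilde\Pi\begin{bmatrix}\omega_i & z_i^d\end{bmatrix}^\top\Big)B_\omega^\top .
\end{equation*}
The bracket is a scalar. By Assumption~\ref{assumption1} with $T=T_f$, it is nonnegative for every admissible multiplier. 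Therefore $M(\tilde\Pi)\succeq0$ for all $\tilde\Pi\in\boldsymbol\Pi$.

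\emph{Inclusion \eqref{consistent_set} $\subseteq\Sigma$.} This direction is harder, and I would do it in two steps.

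First, I will show that each $r_i$ lies in $\operatorname{span}\{B_\omega\}$. Take any $v\perp B_\omega$. Then $v^\top M(\tilde\Pi)v=\tilde\Pi_{11}\sum_i (v^\top r_i)^2$. If $\boldsymbol\Pi$ contains some $\tilde\Pi$ with $\tilde\Pi_{11}<0$, nonnegativity forces $v^\top r_i=0$ for all $i$. For instance, in \eqref{piuncertainty} one has $\tilde\Pi_{11}=-2G_{12}<0$ whenever $G_{12}>0$. Since $v\perp B_\omega$ was arbitrary, $r_i=B_\omega\omega_i$ with $\omega_i:=B_\omega^\top r_i/\|B_\omega\|^2$, which is well defined because $B_\omega\neq0$. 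I expect this to be the main obstacle. The statement needs this strictly negative-$\tilde\Pi_{11}$ property, which is standard for IQC multipliers, and I would make it explicit in the proof. Without it, $M\succeq0$ does not confine the residuals to the range of $B_\omega$.

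Second, with $r_i=B_\omega\omega_i$, the identity above gives $M(\tilde\Pi)=\|\cdot\|$-scaled $B_\omega c(\tilde\Pi) B_\omega^\top$, where $c(\tilde\Pi)=\sum_i\begin{bmatrix}\omega_i & z_i^d\end{bmatrix}\tilde\Pi\begin{bmatrix}\omega_i & z_i^d\end{bmatrix}^\top$. Because $B_\omega^\top M(\tilde\Pi) B_\omega=\|B_\omega\|^4c(\tilde\Pi)\ge0$, the IQC holds on the data sequence for every $\tilde\Pi\in\boldsymbol\Pi$.

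It remains to produce $\Delta\in\boldsymbol\Delta$ with $\Delta(z_i^d)=\omega_i$. Following the data-based interpretation of $\boldsymbol\Delta$, I would identify $\Delta$ on the data with the map $z_i^d\mapsto\omega_i$. I would note that this requires reading $\boldsymbol\Delta$ as the set of uncertainties whose data segments satisfy \eqref{assumption1_equation}, which is the sense in which the equality should be understood. Under that reading, $(A,B)\in\Sigma$, which closes the equality.
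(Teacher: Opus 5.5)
\textbf{Where your proof matches the paper, and where it improves on it.} Your forward inclusion is the paper's argument: factor $\begin{bmatrix} r_i & B_\omega z_i^d\end{bmatrix}=B_\omega\begin{bmatrix}\omega_i & z_i^d\end{bmatrix}$ and apply \eqref{assumption1_equation} with $T=T_f$. For the reverse inclusion, the paper only sets $B_\omega\Delta(z_i^d):=r_i$ and asserts ``since $B_\omega\neq 0$, $\Delta\in\boldsymbol{\Delta}$''. Your first step is more careful than that, and it is correct. From $v^\top(\cdot)v=\tilde\Pi_{11}\sum_i(v^\top r_i)^2$ for $v\perp B_\omega$, you get $r_i\in\operatorname{span}\{B_\omega\}$ once some $\tilde\Pi\in\boldsymbol{\Pi}$ has $\tilde\Pi_{11}<0$. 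That extra hypothesis is genuinely needed. For $\boldsymbol{\Pi}=\{c\begin{bmatrix}0&1\\1&0\end{bmatrix}: c\geq 0\}$, the matrix in \eqref{consistent_set} depends on the residuals only through $\sum_i z_i^d r_i$. Individual $r_i$ can then leave the range of $B_\omega$.

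\textbf{The gap.} It is in your last step. You never construct a $\Delta\in\boldsymbol{\Delta}$; you redefine $\boldsymbol{\Delta}$. This is not a harmless reading: under Assumption~\ref{assumption1} as stated, the reverse inclusion is false, so no argument can close it.
\begin{itemize}
\item Inequality \eqref{assumption1_equation} must hold for every horizon $T$, applied to the data signal itself. So any admissible $\Delta$ producing the data must make every partial sum $\sum_{i=0}^{T-1}\begin{bmatrix}\omega_i & z_i^d\end{bmatrix}\tilde\Pi\begin{bmatrix}\omega_i & z_i^d\end{bmatrix}^\top$, $T=1,\dots,T_f$, nonnegative.
\item For a static $\Delta:\mathbb{R}\to\mathbb{R}$, taking $T=1$ makes every single summand nonnegative. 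Such a $\Delta$ must also assign one value to repeated $z_i^d$.
\item The set \eqref{consistent_set} imposes only the total sum with $T=T_f$.
\end{itemize}

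\textbf{Counterexample.} Take the sector multiplier \eqref{piuncertainty}, $n_x=n_u=1$, $B_\omega=1$, $T_f=2$, $z_1^d\neq 0$, and $\begin{bmatrix}x_0^d & x_1^d\\ u_0^d & u_1^d\end{bmatrix}$ invertible. Then the residuals can be assigned freely. Choose $(A,B)$ so that $(\omega_0-\alpha z_0^d)(\beta z_0^d-\omega_0)<0$ while the sum over $i=0,1$ stays nonnegative. This $(A,B)$ lies in \eqref{consistent_set}. However, \eqref{assumption1_equation} with $T=1$ excludes every $\Delta\in\boldsymbol{\Delta}$ that could generate it, so $(A,B)\notin\Sigma$.

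\textbf{What you can prove.} Two statements hold. First, $\Sigma\subseteq$ \eqref{consistent_set}, which is the only direction Theorem~1 uses. Second, under your $\tilde\Pi_{11}<0$ hypothesis, \eqref{consistent_set} equals the relaxed set $\{(A,B): r_i=B_\omega\omega_i,\ \sum_i\begin{bmatrix}\omega_i & z_i^d\end{bmatrix}\tilde\Pi\begin{bmatrix}\omega_i & z_i^d\end{bmatrix}^\top\geq 0\ \forall\tilde\Pi\in\boldsymbol{\Pi}\}$. The paper's one-line ``$\Delta\in\boldsymbol{\Delta}$'' has the same hole. State your result in one of these two forms rather than claiming the equality as stated.
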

\begin{pf}
Since the IQC \eqref{assumption1_equation} holds for all $T\in\mathbb{N}$, we have
    $\sum_{i=0}^{T_f-1}\begin{bmatrix}\Delta(z_i^d) &z_i^d\end{bmatrix}\tilde{\Pi} \begin{bmatrix}\Delta(z_i^d) &z_i^d\end{bmatrix}^\top\geq 0$
for $\tilde{\Pi}\in\boldsymbol{\Pi}$.
Left- and right multiplying the above inequality with $B_\omega$ and $B_\omega^\top$, respectively, we obtain
\begin{equation}\label{pf_2}
    \sum_{i=0}^{T_f-1} \begin{bmatrix}B_\omega \Delta(z_i^d) &B_\omega z_i^d\end{bmatrix}\tilde{\Pi} \begin{bmatrix}B_\omega \Delta(z_i^d) &B_\omega z_i^d\end{bmatrix}^\top\succeq 0.
\end{equation}
Based on the system dynamics \eqref{system}, $(A, B)\in\Sigma$ if and only if \eqref{pf_2} holds with $B_\omega\Delta (z_i^d)=x_{i+1}^d-Ax_i^d-Bu_i^d$.
Replacing $B_\omega\Delta (z_i^d)$ by $x_{i+1}^d-Ax_i^d-Bu_i^d$ in \eqref{pf_2}, the set \eqref{consistent_set} includes all possible $(A, B)$ that are consistent with the data \eqref{data}.

To prove the reverse inclusion, let $(A, B)$ be in the set \eqref{consistent_set}.
Then, we have \eqref{pf_2} holds by defining $B_\omega\Delta (z_i^d)=x_{i+1}^d-Ax_i^d-Bu_i^d$.
Since $B_\omega\neq 0$, $\Delta\in\boldsymbol{\Delta}$.
Since $x_{i+1}^d=Ax_i^d+Bu_i^d+B_\omega\Delta (z_i^d)$ holds for some $\Delta\in\boldsymbol{\Delta}$ for all $i\in\mathbb{I}_{[0, T_f-1]}$, $(A, B)\in \Sigma$.
$\hfill\qed$
\end{pf}

\subsection{Data-driven min-max MPC}\label{sec3.2}

Given the current state $x_t\in\mathbb{R}^{n_x}$, the data-driven min-max MPC problem is formulated as follows:
\begin{subequations}\label{mpc}
\begin{align}
J_\infty^*(x_t)&:=\min_{\bar{u}(t)}\max_{(A, B)\in\Sigma, \Delta\in\boldsymbol{\Delta}}\sum_{k=0}^{\infty}\ell(\bar{u}_k(t), \bar{x}_k(t))\label{mpc:obj}\\
\text{s.t.}\quad &\bar{x}_{k+1}(t)=A\bar{x}_k(t)+B\bar{u}_k(t)+B_\omega \Delta(\bar{z}_k(t)),\label{mpc:con1}\\
&\bar{z}_k(t)=C\bar{x}_{k}(t)+D\bar{u}_k(t),\label{mpc:con2}\\
&\bar{x}_0(t)=x_t,\label{mpc:con3}\\
&\|\bar{u}_k(t)\|_{S_u}\leq 1, \forall k\in\mathbb{N},\label{mpc:con4}\\
&\|\bar{x}_k(t)\|_{S_x}\leq 1, \forall (A, B)\in\Sigma, \Delta\in\boldsymbol{\Delta}, k\in\mathbb{N},\label{mpc:con5}
\end{align}
\end{subequations}
where $\bar{u}_k(t), \bar{x}_k(t)$ denote the predicted input and state at time $t+k$ given the current measured state $x_t$, $\bar{u}$ is a sequence of predicted inputs with $\bar{u}_{k}(t)=F\bar{x}_{k}(t), \forall k\in\mathbb{N}$.
The objective of this optimization problem is to find a state-feedback control law that minimizes the worst-case infinite-horizon cost over all consistent system matrices in $\Sigma$ and all $\Delta\in\boldsymbol{\Delta}$.
The predicted input and state satisfy the constraints for any $(A, B)\in\Sigma, \Delta\in\boldsymbol{\Delta}$ as in \eqref{mpc:con4}-\eqref{mpc:con5}.
The data-driven min-max MPC problem is intractable because of the min-max formulation and the infinite-horizon cost. 
We will later formulate two computationally tractable problems and iteratively solve the problems to obtain a state-feedback control law that stabilizes the system with $(A, B)\in\Sigma$ and $\Delta\in\boldsymbol{\Delta}$.

Given the state $x_t\in\mathbb{R}^{n_x}$ and the data \eqref{data}, the optimization problem is first formulated as
\begin{subequations}\label{sdp1}
    \begin{align}
    &\min_{\gamma>0, P\in\mathbb{R}^{n_x\times n_x}, F\in\mathbb{R}^{n_u\times n_x}, \Pi, \tilde{\Pi}\in\boldsymbol{\Pi}}\gamma\\
         \text{s.t. }&x_t^\top P x_t\leq \gamma,\label{sdp1.1}\\
    &\begin{bmatrix}\star\end{bmatrix}^\top\!\!\!
    \begin{bmatrix}
        -P\!+\!Q\!+\!F^\top\!R F \!\!\!& \!\!\!& \!\!\!&\\
           \!\!\!&P \!\!\!& \!\!\!&\\
           \!\!\!&  \!\!\!&\Pi \!\!\!&\\
           \!\!\!&  \!\!\!&  \!\!\!&M(\tilde{\Pi})
    \end{bmatrix}\!\!\!\!
    \begin{bmatrix}
        I &0 &0\\
        0 &B_\omega &I\\
        0 &I &0\\
        C\!+\!DF &0 &0\\
        0 &0 &I\\
        \begin{bmatrix}I\\F\end{bmatrix} &0 &0
    \end{bmatrix}\prec 0,\label{sdp1.2}\\
    &P\succeq \gamma F^\top S_u F, P\succeq \gamma S_x, \label{sdp1.3}
    \end{align}
\end{subequations}
where $M(\tilde{\Pi})=\begin{bmatrix}I &0\\0 &-I\end{bmatrix}S(\tilde{\Pi})^{-1}\begin{bmatrix}-I &0\\0 &I\end{bmatrix}$.
In Theorem~1, we show that the optimal cost of \eqref{mpc} is bounded by $\gamma$.

\begin{mythm}\upshape
Suppose that Assumption~\ref{assumption1} holds, $B_\omega\neq 0$ and there exist $\gamma>0, P\in\mathbb{R}^{n_x\times n_x}, F\in\mathbb{R}^{n_u\times n_x}, \Pi, \tilde{\Pi}\in \boldsymbol{\Pi}$ such that \eqref{sdp1.1} and \eqref{sdp1.2} hold.
Then, $\gamma$ is an upper bound on the optimal cost of the data-driven min-max MPC problem \eqref{mpc}.
\end{mythm}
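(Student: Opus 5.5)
Our strategy is to use the standard Lyapunov-type argument for LMI-based min-max MPC (as in \cite{kothare1996robust}): show that $V(x)=x^\top P x$ decreases along every predicted trajectory, for any $(A,B)\in\Sigma$ and $\Delta\in\boldsymbol{\Delta}$, by at least the stage cost up to an IQC term. Summing over the horizon and using the IQC then gives $\sum_k \ell\le x_t^\top P x_t\le\gamma$. We fix the feedback $\bar u_k=F\bar x_k$ and write $\bar x_{k+1}=(A+BF)\bar x_k+B_\omega\omega_k$ with $\omega_k=\Delta(\bar z_k)$ and $\bar z_k=(C+DF)\bar x_k$. Since the minimum in \eqref{mpc} is at most the worst-case cost of this particular $F$, it suffices to bound that worst-case cost.

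\textbf{Step 1 (dissipation inequality).} We multiply \eqref{sdp1.2} from the left and right by the vector $[\bar x_k^\top\;\omega_k^\top\;v^\top]$ and its transpose, where $v$ is an auxiliary vector. The natural choice is $v^\top=\bar x_k^\top[A\;B]$-type terms, i.e. $v=[A\;B]\begin{bmatrix}I\\F\end{bmatrix}\bar x_k$, so that the second block row $B_\omega\omega_k+v$ equals $\bar x_{k+1}$. The multiplied inequality then reads
\[
-\|\bar x_k\|_P^2+\ell(\bar x_k,F\bar x_k)+\|\bar x_{k+1}\|_P^2+[\omega_k\;\bar z_k]\Pi[\omega_k\;\bar z_k]^\top+[\star]^\top M(\tilde\Pi)\begin{bmatrix}v\\ \begin{bmatrix}I\\F\end{bmatrix}\bar x_k\end{bmatrix}\le 0 ,
\]
with strict inequality when $\bar x_k\neq0$.

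\textbf{Step 2 (sign of the data term).} This is the main obstacle. We need $[\star]^\top M(\tilde\Pi)[\,v;\,w\,]\ge 0$ whenever $v=[A\;B]w$ and $(A,B)$ is in the set \eqref{consistent_set}. Lemma~\ref{lemma1} gives $[I\;A\;B]S(\tilde\Pi)[I\;A\;B]^\top\succeq0$. The plan is to invoke a dualization lemma (as in Scherer's full-block S-procedure): if $S(\tilde\Pi)$ is invertible with appropriate inertia, then $\begin{bmatrix}I\\ [A\;B]^\top\end{bmatrix}$-type positivity of $S$ is equivalent to negativity of the inverse on the complementary subspace $\begin{bmatrix}-[A\;B]\\ I\end{bmatrix}$. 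The sign flips built into $M(\tilde\Pi)$, via $\mathrm{diag}(I,-I)$ on one side and $\mathrm{diag}(-I,I)$ on the other, are designed to turn this into $[\star]^\top M(\tilde\Pi)[\,[A\;B]w;\,w\,]\ge0$. I would verify this carefully, including the inertia assumption on $S(\tilde\Pi)$ implicit in the existence of $S(\tilde\Pi)^{-1}$. We then drop this nonnegative term.

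\textbf{Step 3 (summation).} We sum the inequality of Step 1 for $k=0,\dots,N-1$. This gives
\[
\sum_{k=0}^{N-1}\ell(\bar x_k,\bar u_k)+\|\bar x_N\|_P^2+\sum_{k=0}^{N-1}[\omega_k\;\bar z_k]\Pi[\omega_k\;\bar z_k]^\top\le\|x_t\|_P^2 .
\]
By Assumption~\ref{assumption1} with $T=N$ and $\Pi\in\boldsymbol\Pi$, the IQC sum is nonnegative. We also need $P\succeq0$, so that the terminal term $\|\bar x_N\|_P^2$ can be dropped. This follows from the $(1,1)$ block of \eqref{sdp1.2}, together with the contributions of $P$ and of the positive data term, provided the problem is feasible. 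If not available directly, we use \eqref{sdp1.3}. Letting $N\to\infty$ gives a worst-case cost of at most $x_t^\top Px_t\le\gamma$ by \eqref{sdp1.1}, which proves the claim.
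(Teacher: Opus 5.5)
Your argument follows the paper's proof. You multiply \eqref{sdp1.2} by $[\bar x_k^\top\ \bar\omega_k^\top\ \bar\omega_{AB,k}^\top]^\top$ with $\bar\omega_{AB,k}=[A\ B]\,[I\ \ F^\top]^\top\bar x_k$, use Lemma~\ref{lemma1} and the dualization lemma to make the $M(\tilde\Pi)$-term nonnegative, apply the IQC to the summed $\Pi$-terms, telescope, and finish with \eqref{sdp1.1}. Summing to a finite $N$ first is cleaner than the paper's $\|\bar x_\infty(t)\|_P^2$. Your caution about the inertia of $S(\tilde\Pi)$ in Step~2 is warranted, since the paper invokes the dualization lemma without checking it either.

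The step that fails as you justify it is $P\succeq 0$.

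\textbf{Why \eqref{sdp1.2} does not give it.} Evaluating the quadratic form of \eqref{sdp1.2} at $(\bar x_k,0,0)$ only yields
$P\succ Q+F^\top RF+(C+DF)^\top\Pi_{22}(C+DF)+[I\ \ F^\top]M_{22}[I\ \ F^\top]^\top$.
Here $\Pi_{22}$ is the $z$-entry of $\Pi$ and $M_{22}$ is the lower-right block of $M(\tilde\Pi)$. The data term is nonnegative only on the graph $v=[A\ B]w$, not at $v=0$. Moreover, $M_{22}$ is not sign-definite: in the scalar case, under the inertia needed for dualization, $[1\ F]M_{22}[1\ F]^\top<0$ exactly when $A+BF\neq0$ for every consistent $(A,B)$.

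\textbf{The statement can fail without it.} Take a scalar system whose consistent set is a small neighbourhood of some $(A,B)$ with $A+BF=2$. After suitably scaling $\Pi$ and $\tilde\Pi$, \eqref{sdp1.2} can be satisfied with $P<0$. Then \eqref{sdp1.1} holds for every $\gamma>0$, even though the optimal cost of \eqref{mpc} is at least $\|x_t\|_Q^2$. So $P\succeq0$ is a genuine additional requirement, not a consequence of \eqref{sdp1.2}.

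\textbf{Where it must come from.} It has to come from \eqref{sdp1.3}, which gives $P\succeq\gamma S_x\succeq0$. The paper's proof uses this tacitly when it discards $\|\bar x_\infty(t)\|_P^2$. The same \eqref{sdp1.3} is also what makes $F$ admissible for \eqref{mpc:con4}--\eqref{mpc:con5}. Your opening reduction (the minimum is at most the worst-case cost of this particular $F$) needs that admissibility too.

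With \eqref{sdp1.3} added to the hypotheses, your proof goes through, modulo the inertia conditions of the dualization step.
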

\begin{pf}
Since we focus on designing a state-feedback controller $\bar{u}_{k}(t)=F\bar{x}_{k}(t), \forall k\in\mathbb{N}$, the predicted closed-loop system dynamics with any $(A, B)\in\Sigma$ can be written as
\begin{equation}\label{closed_system}
\begin{aligned}
\bar{x}_{k+1}(t)&\!=\!\bar{\omega}_{AB, k}(t)+B_\omega \bar{\omega}_k(t),\\
\bar{z}_{AB, k}(t)&=\begin{bmatrix}
    I& F^\top
\end{bmatrix}^\top\bar{x}_k(t),
\bar{\omega}_{AB, k}(t)\!=\!\begin{bmatrix}A &B\end{bmatrix}\bar{z}_{AB, k}(t),\\
\bar{z}_k(t)&\!=\!(C+DF)\bar{x}_k(t),
\bar{\omega}_k(t)\!=\!\Delta(\bar{z}_k(t)),
\end{aligned}
\end{equation}
where the system parametric uncertainty $(A, B)\in\Sigma$ enters the dynamics through the channel from $\bar{z}_{AB, k}(t)\in\mathbb{R}^{n_x+n_u}$ to $\bar{\omega}_{AB, k}(t)\in\mathbb{R}^{n_x}$.

Multiplying the inequality \eqref{sdp1.2} from left and right by 
$\begin{bmatrix}\bar{x}_k(t)^\top &\bar{\omega}_k(t)^\top &\bar{\omega}_{AB, k}(t)^\top\end{bmatrix}^\top$ and its transpose, respectively,  and using \eqref{closed_system}, the following inequality holds
\begin{equation}\label{proof1}
    \begin{bmatrix}\star\end{bmatrix}^\top \begin{bmatrix}
        -P\!+\!Q\!+\!F^\top\!R F \!\!\!& \!\!\!& \!\!\!&\\
           \!\!\!&P \!\!\!& \!\!\!&\\
           \!\!\!&  \!\!\!&\Pi \!\!\!&\\
           \!\!\!&  \!\!\!&  \!\!\!&M(\tilde{\Pi})
    \end{bmatrix}\!\!\!\!
    \begin{bmatrix}
        \bar{x}_k(t)\\
        \bar{x}_{k+1}(t)\\
        \bar{\omega}_k(t)\\
        \bar{z}_k(t)\\
        \bar{\omega}_{AB, k}(t)\\
        \bar{z}_{AB, k}(t)
    \end{bmatrix}\leq 0,
\end{equation}
which further leads to
\begin{equation}\label{proof2}
\begin{aligned}
    -\|\bar{x}_k(t)\|_P^2+\ell(\bar{u}_k(t), \bar{x}_k(t))+\|\bar{x}_{k+1}(t)\|_P^2\\+\begin{bmatrix}
    \bar{\omega}_k(t)\\
    \bar{z}_k(t)
    \end{bmatrix}^\top \!\!\!\!\Pi\!\!\begin{bmatrix}
    \bar{\omega}_k(t)\\
    \bar{z}_k(t)
    \end{bmatrix}
    \!\!+\!\!\begin{bmatrix}
        \bar{\omega}_{AB, k}(t)\\
        \bar{z}_{AB, k}(t)
    \end{bmatrix}^\top \!\!\!\!\!M(\tilde{\Pi})\!\!
    \begin{bmatrix}
        \bar{\omega}_{AB, k}(t)\\
        \bar{z}_{AB, k}(t)
    \end{bmatrix}\!\!\leq\! 0.
\end{aligned}
\end{equation}
Summing \eqref{proof2} up from $k=0$ to $k=T$ and letting $T\rightarrow \infty$, 
\begin{equation}\label{proof21}
\begin{aligned}
    \|\bar{x}_\infty(t)\|_P^2-\|\bar{x}_0(t)\|_P^2+\sum_{k=0}^\infty \ell (\bar{u}_k(t), \bar{x}_k(t))+\\
    \sum_{k=0}^\infty \!\!\begin{bmatrix}
    \bar{\omega}_k(t)\\
    \bar{z}_k(t)
    \end{bmatrix}^\top \!\!\!\!\Pi\!\!\begin{bmatrix}
    \bar{\omega}_k(t)\\
    \bar{z}_k(t)
    \end{bmatrix}
    \!\!+\!\!\sum_{k=0}^\infty\!\! \begin{bmatrix}
        \bar{\omega}_{AB, k}(t)\\
        \bar{z}_{AB, k}(t)
    \end{bmatrix}^\top \!\!\!\!M(\tilde{\Pi})\!\!
    \begin{bmatrix}
        \bar{\omega}_{AB, k}(t)\\
        \bar{z}_{AB, k}(t)
    \end{bmatrix}\leq 0.
\end{aligned}
\end{equation}
Since \eqref{assumption1_equation} holds for any $T\in\mathbb{N}$, $\omega=\Delta(z)$ and we have $ \bar{z}_k(t), \bar{\omega}_k(t) \in\mathbb{R}$, 
the following inequality holds 
\begin{equation}\label{proof3}
    \sum_{k=0}^\infty\begin{bmatrix}
    \bar{\omega}_k(t)\\
    \bar{z}_k(t)
    \end{bmatrix}^\top \!\!\!\!\Pi\!\!\begin{bmatrix}
    \bar{\omega}_k(t)\\
    \bar{z}_k(t)
    \end{bmatrix}\geq 0.
\end{equation}
Since $\begin{bmatrix}
        I
        &A
        &B
    \end{bmatrix}
    S(\tilde{\Pi})
    \begin{bmatrix}
        I
        &A
        &B
    \end{bmatrix}^\top\succeq 0$ holds for any $(A, B)\in\Sigma$,
using the Dualization lemma in \cite[Lemma 4.9]{scherer2000linear}, we have
\begin{equation}\label{proof4}
    \begin{bmatrix}
        \begin{bmatrix}A &B\end{bmatrix}\\
        I
    \end{bmatrix}^\top M(\tilde{\Pi}) \begin{bmatrix}
        \begin{bmatrix}A &B\end{bmatrix}\\
        I
    \end{bmatrix}\succeq 0.
\end{equation}
Multiplying \eqref{proof4} from left and right by $\bar{z}_{AB, k}(t)^\top$ and its transpose, respectively, using $\bar{\omega}_{AB, k}(t)=\begin{bmatrix}A &B\end{bmatrix}\bar{z}_{AB, k}(t)$, we have
\begin{equation}\label{thm1proof1}
    \begin{bmatrix}
        \bar{\omega}_{AB, k}(t)\\
        \bar{z}_{AB, k}(t)
    \end{bmatrix}^\top
    M(\tilde{\Pi})
    \begin{bmatrix}
        \bar{\omega}_{AB, k}(t)\\
        \bar{z}_{AB, k}(t)
    \end{bmatrix}\succeq 0.
\end{equation}
Summing \eqref{thm1proof1} up from $k=0$ to $T$ and letting $T\rightarrow \infty$, we have
\begin{equation}\label{proof5}
\sum_{k=0}^\infty
\begin{bmatrix}
        \bar{\omega}_{AB, k}(t)\\
        \bar{z}_{AB, k}(t)
    \end{bmatrix}^\top \!\!\!\!M(\tilde{\Pi})\!\!
    \begin{bmatrix}
        \bar{\omega}_{AB, k}(t)\\
        \bar{z}_{AB, k}(t)
    \end{bmatrix}\geq 0.
\end{equation}
Using \eqref{proof21} together with \eqref{proof3} and \eqref{proof5}, we have 
\begin{equation}\label{proof51}
    \|\bar{x}_\infty(t)\|_P^2+\sum_{k=0}^\infty \ell (\bar{u}_k(t), \bar{x}_k(t))\leq \|\bar{x}_0(t)\|_P^2
\end{equation}
for all $(A, B)\in\Sigma, \Delta\in\boldsymbol{\Delta}$.
Since $\|\bar{x}_\infty(t)\|_P^2\geq 0$, $\bar{x}_0(t)=x_t$ and using \eqref{sdp1.1} and \eqref{proof51}, we obtain
$\max_{(A, B)\in\Sigma, \Delta\in\boldsymbol{\Delta}}\sum_{k=0}^\infty \ell (\bar{u}_k(t), \bar{x}_k(t))\leq \|x_t\|_P^2\leq \gamma$.
Thus, $\gamma$ is an upper bound on the optimal cost of the data-driven min-max MPC problem \eqref{mpc}.
$\hfill\qed$
\end{pf}

The problem \eqref{sdp1} is computationally intractable as its constraints are not linear in the variables $F$ and $\tilde{\Pi}$.
Given the state $x_t\in\mathbb{R}^{n_x}$ and the data \eqref{data}, we formulate the following equivalent optimization problem
\begin{subequations}\label{sdp2}
    \begin{align}
        &\max_{\lambda>0, H\in\mathbb{R}^{n_x\times n_x}, L\in\mathbb{R}^{n_u\times n_x}, \Pi, \tilde{\Pi}\in\boldsymbol{\Pi}} \lambda\\
\text{s.t.} &H-\lambda x_t x_t^\top\succeq 0,\label{sdp2.1}\\
& \begin{bmatrix}
            \begin{bmatrix}\star\end{bmatrix}^\top\!\!\!
    \begin{bmatrix}
        H \!\!\!\!& \!\!\!\!&\\
        \!\!\!\!&\Pi^{-1} \!\!\!\!&\\
        \!\!\!\!& \!\!\!\!&M(\tilde{\Pi})^{-1}
    \end{bmatrix}\!\!\!\!
    \begin{bmatrix}
        I &0 &0\\
      -B_\omega^\top &0 &0\\
        0 &I &0\\
        -I &0 &0\\
        0 &0 &I
    \end{bmatrix} &\begin{bmatrix}
        0\\
        \!CH\!\!+\!\!DL\!\\
        \begin{bmatrix}
            H\\
            L
        \end{bmatrix}
    \end{bmatrix} &0\\
    \begin{bmatrix}
        0 &(CH\!\!+\!\!DL)^\top &\begin{bmatrix}H\\L\end{bmatrix}^\top
    \end{bmatrix}
    &H &\Phi^\top\\
    0 &\Phi &I
        \end{bmatrix}\succ 0,\label{sdp2.2}\\
    &\begin{bmatrix}H &L^\top\\L &\lambda S_u^{-1}\end{bmatrix}\succeq 0, 
    \begin{bmatrix}
        H &H\\ H &\lambda S_x^{-1}
    \end{bmatrix}\succeq 0,\label{sdp2.3}
    \end{align}
\end{subequations}
where $\Phi=\begin{bmatrix}M_RL\\M_Q H\end{bmatrix}$ with $M_R^\top M_R=R, M_Q^\top M_Q=Q$ and $M(\tilde{\Pi})$ is defined the same as \eqref{sdp1}.
Problem \eqref{sdp2} is not linear in $\Pi$, but linear in other variables.
We will later propose an algorithm that iteratively solves these two problems by fixing some variables to find the optimal solution of problem \eqref{sdp1}.

In the following theorem, we show that the problem \eqref{sdp1} is equivalent to the problem \eqref{sdp2}.

\begin{mythm}
Problem \eqref{sdp1} is equivalent to the problem \eqref{sdp2}.
\end{mythm}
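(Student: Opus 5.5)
The plan is to map feasible points of \eqref{sdp1} to feasible points of \eqref{sdp2}, and back, through the change of variables
\[
H=\gamma P^{-1},\qquad L=FH,\qquad \lambda=\gamma^{-1},
\]
keeping the multipliers $\Pi$ and $\tilde{\Pi}$ unchanged. Minimizing $\gamma$ is then the same as maximizing $\lambda$. We show the three groups of constraints correspond one by one.

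\textbf{Constraint \eqref{sdp1.1}.} First note that \eqref{sdp1.2} forces $P\succ 0$. Its $(1,1)$-type information gives $-P+Q+F^\top RF+\ldots\prec 0$ on suitable directions. Alternatively, we may simply restrict both problems to $P\succ 0$ and $H\succ 0$, as is standard. By a Schur complement, $x_t^\top P x_t\le\gamma$ is equivalent to $\begin{bmatrix}1 & x_t^\top\\ x_t & \gamma P^{-1}\end{bmatrix}\succeq 0$. After scaling by $\lambda$, this becomes $H-\lambda x_tx_t^\top\succeq 0$, which is \eqref{sdp2.1}.

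\textbf{Constraint \eqref{sdp1.3}.} The condition $P\succeq\gamma F^\top S_uF$ is congruent, via multiplication by $P^{-1}$ on both sides, to $H\succeq \lambda^{-1} L^\top S_u L$, up to the scaling by $\gamma$. A Schur complement turns this into the first block inequality in \eqref{sdp2.3}. Similarly, $P\succeq\gamma S_x$ becomes $H\succeq \lambda^{-1} H S_x H$, which is the second block inequality.

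\textbf{Constraint \eqref{sdp1.2}: the main step.} Write \eqref{sdp1.2} in the form
\[
\begin{bmatrix}\star\end{bmatrix}^\top \mathcal{P}\begin{bmatrix} I & 0\\ \mathcal{A} & \mathcal{B}\end{bmatrix}\prec 0,
\]
where the inner multiplier is $\mathcal{P}=\mathrm{diag}(P,\Pi,M(\tilde\Pi))$ acting on the outputs $(x_{k+1},\omega,z,\omega_{AB},z_{AB})$, and the term $-P+Q+F^\top RF$ sits on the state block. The plan has three moves:
\begin{enumerate}[(i)]
\item Move the cost terms $Q+F^\top RF$ out of the quadratic form using a Schur complement. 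This produces the last row and column, which contain $\Phi$ together with the identity block.
\item Apply the dualization lemma \cite[Lemma 4.9]{scherer2000linear} to the remaining quadratic matrix inequality. This replaces $\mathrm{diag}(P,\Pi,M(\tilde\Pi))$ by its inverse, $\mathrm{diag}(P^{-1},\Pi^{-1},M(\tilde\Pi)^{-1})$, with a sign flip. It also replaces the outer factor by its annihilator-based dual, which is the matrix containing $-B_\omega^\top$ and $-I$ in \eqref{sdp2.2}.
\item Perform a congruence with $\mathrm{diag}(I,P^{-1},I)$ and scale by $\gamma$. This makes the products $P^{-1}$ and $FP^{-1}$ linear in $H$ and $L$, giving the blocks $CH+DL$ and $\begin{bmatrix}H\\L\end{bmatrix}$.
\end{enumerate}

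The main obstacle is step (ii). The dualization lemma requires inertia conditions on the multiplier: it must be nonsingular, with the number of positive eigenvalues matching the input dimension in the correct subspaces. So we must check that $\Pi$ and $M(\tilde\Pi)$ have the right inertia, namely one positive and one negative direction for the sector-type $\Pi$, and the inertia of $S(\tilde\Pi)^{-1}$ inherited from the data via Lemma~\ref{lemma1}. We must also check that the complementary inequalities required by the lemma, such as \eqref{proof4}, hold. If these are only available as non-strict inequalities, I would add an assumption that $S(\tilde\Pi)$ is invertible with the appropriate inertia, as is implicit in the definition of $M(\tilde\Pi)$, and that $\Pi$ is nonsingular, as is implicit in the use of $\Pi^{-1}$ in \eqref{sdp2}.

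Since every transformation above is an equivalence, with Schur complements valid because $H\succ 0$, $\lambda>0$ and the identity block is positive, the feasible sets correspond bijectively. The objectives satisfy $\lambda=1/\gamma$, so the two problems are equivalent.
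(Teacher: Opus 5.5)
Your architecture matches the paper's proof: Schur complements for \eqref{sdp1.1} and \eqref{sdp1.3}, and dualization, $L=FH$ and two Schur complements for \eqref{sdp1.2}. However, the change of variables is wrong. Problem \eqref{sdp2} as written corresponds to $H=P^{-1}$, $L=FP^{-1}$, $\lambda=1/\gamma$, which is the paper's choice, not to $H=\gamma P^{-1}$. Your constraint-by-constraint derivations do not hold under the substitution you declare.

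\begin{itemize}
\item With $H=\gamma P^{-1}$, \eqref{sdp1.1} is equivalent to $H\succeq x_tx_t^\top$. Your ``scaling by $\lambda$'' produces $\lambda(\gamma P^{-1}-x_tx_t^\top)=P^{-1}-\lambda x_tx_t^\top\succeq 0$, which is \eqref{sdp2.1} only if $H=P^{-1}$.
\item Conversely, inserting $H=\gamma P^{-1}$, $\lambda=1/\gamma$ into \eqref{sdp2.1} gives $x_t^\top P x_t\le\gamma^2$. So your map does not carry the feasible set of \eqref{sdp1} onto that of \eqref{sdp2}.
\item For \eqref{sdp1.3}: since $HPH=\gamma H$, congruence with $H$ gives $H\succeq L^\top S_uL$ and $H\succeq HS_xH$. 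These are Schur complements with $S_u^{-1}$ and $S_x^{-1}$, with no factor $\lambda$.
\item For \eqref{sdp1.2}: one gets $H(Q+F^\top RF-P)H=\Phi^\top\Phi-\gamma H$, i.e.\ the Kothare-type block $\begin{bmatrix}H&\Phi^\top\\ \Phi&\gamma I\end{bmatrix}$. Your ``scale by $\gamma$'' in step (iii) also turns $\Pi^{-1}$ and $M(\tilde\Pi)^{-1}$ into $\gamma\Pi^{-1}$ and $\gamma M(\tilde\Pi)^{-1}$.
\end{itemize}
None of this is \eqref{sdp2}.

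The repair is to use $H=P^{-1}$ and drop the scaling in step (iii). Then:
\begin{itemize}
\item $x_t^\top H^{-1}x_t\le\gamma\iff H-\lambda x_tx_t^\top\succeq 0$;
\item $P\succeq\gamma F^\top S_uF\iff H\succeq\gamma L^\top S_uL$;
\item $P\succeq\gamma S_x\iff H\succeq\gamma HS_xH$;
\item $(Q+F^\top RF-P)^{-1}=H(\Phi^\top\Phi-H)^{-1}H$.
\end{itemize}
Your intermediate targets ($P^{-1}-\lambda x_tx_t^\top$, $H\succeq\lambda^{-1}L^\top S_uL$) are exactly the first two lines, so you were implicitly using $H=P^{-1}$ already.

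Also mind the order of steps in \eqref{sdp1.2}. In \eqref{sdp2.2}, $\Phi$ sits next to the dual variable $H$, so the Schur complement producing the $\begin{bmatrix}H&\Phi^\top\\ \Phi&I\end{bmatrix}$ block must come \emph{after} dualization and the change of variables, not before it. The paper dualizes with $-P+Q+F^\top RF$ kept inside the multiplier, expands, substitutes the identity above with $L=FH$, and only then applies two Schur complements.

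Your caution about the inertia and complementary-subspace hypotheses of the dualization lemma is legitimate. The paper invokes the lemma without checking them.
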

\begin{pf}
Let $H=P^{-1}$ and $\lambda=\frac{1}{\gamma}$.
Using the Schur complement twice, \eqref{sdp1.1} is equivalent to \eqref{sdp2.1}.
The objective of \eqref{sdp1} is equivalent to that of \eqref{sdp2} given that $\lambda=\frac{1}{\gamma}$.

Using the Dualization lemma in \cite{scherer2000linear}, \eqref{sdp1.2} is equivalent to
\begin{equation}\label{proof6}
\begin{aligned}
    \begin{bmatrix}\star\end{bmatrix}^\top
    \begin{bmatrix}
        (Q\!+\!F^\top R F\!-\!P)^{-1}& & &\\
        &P^{-1} & &\\
        & &\Pi^{-1} &\\
        & & &M(\tilde{\Pi})^{-1}
    \end{bmatrix}\!\!\\
    \begin{bmatrix}
        0 \!\!&-(C\!\!+\!\!DF)^\top \!\!&-\!\begin{bmatrix}I\\ F\end{bmatrix}^\top\\
        I \!\!&0 \!\!&0\\
      -B_\omega^\top \!\!&0 \!\!&0\\
        0 \!\!&I \!\!&0\\
        -I \!\!&0 \!\!&0\\
        0 \!\!&0 \!\!&I
    \end{bmatrix}\!\!\!\succ\! 0.
\end{aligned}
\end{equation}
The inequality \eqref{proof6} is equivalent to
\begin{equation}\label{proof7}
\begin{aligned}
    \begin{bmatrix}
        0\\
        C+DF\\
        \begin{bmatrix}
            I\\
            F
        \end{bmatrix}
    \end{bmatrix}
    (Q+F^\top RF-P)^{-1}
    \begin{bmatrix}
        0\\
        C+DF\\
        \begin{bmatrix}
            I\\
            F
        \end{bmatrix}
    \end{bmatrix}^\top+\\
    \begin{bmatrix}\star\end{bmatrix}^\top
    \begin{bmatrix}
        P^{-1} & &\\
        &\Pi^{-1} &\\
        & &M(\tilde{\Pi})^{-1}
    \end{bmatrix}
    \begin{bmatrix}
        I &0 &0\\
      -B_\omega^\top &0 &0\\
        0 &I &0\\
        -I &0 &0\\
        0 &0 &I
    \end{bmatrix}\succ 0.
\end{aligned}
\end{equation}
Let $L=FH$. Using $\Phi=\begin{bmatrix}M_RL\\M_Q H\end{bmatrix}$ with $M_R^\top M_R=R, M_Q^\top M_Q=Q$, \eqref{proof7} is equivalent to
\begin{equation}\label{proof8}
\begin{aligned}
    \begin{bmatrix}
        0\\
        CH+DL\\
        \begin{bmatrix}
            H\\
            L
        \end{bmatrix}
    \end{bmatrix}
    (-H+\Phi^\top\Phi)^{-1}
    \begin{bmatrix}
        0\\
        CH+DL\\
        \begin{bmatrix}
            H\\
            L
        \end{bmatrix}
    \end{bmatrix}^\top+\\
    \begin{bmatrix}\star\end{bmatrix}^\top
    \begin{bmatrix}
        H & &\\
        &\Pi^{-1} &\\
        & &M(\tilde{\Pi})^{-1}
    \end{bmatrix}
    \begin{bmatrix}
        I &0 &0\\
      -B_\omega^\top &0 &0\\
        0 &I &0\\
        -I &0 &0\\
        0 &0 &I
    \end{bmatrix}\succ 0.
\end{aligned}
\end{equation}
Using the Schur complement twice, \eqref{proof8} is equivalent to \eqref{sdp2.2}.
Thus, \eqref{sdp1.2} is equivalent to \eqref{sdp2.2}.

Multiplying both sides of the first constraint in \eqref{sdp1.3} by $H$, it is equivalent to $H\succeq \gamma L^\top S_u L$. Using the Schur complement, it is equivalent to the first constraint in \eqref{sdp2.3}.
Multiplying both sides of the second constraint in \eqref{sdp1.3} by $H$, it is equivalent to $H\succeq \gamma H S_x H$. Using the Schur complement, it is equivalent to the second constraint in \eqref{sdp2.3}.
In summary, problem \eqref{sdp1} is equivalent to the problem \eqref{sdp2}.
$\hfill\qed$
\end{pf}

We iteratively solve the problem \eqref{sdp1} and the problem \eqref{sdp2} by fixing some variables, as detailed in Algorithm~1.
At time $t$, given the state $x_t$ and a user chosen multiplier $\hat{\Pi}\in\boldsymbol{\Pi}$, we set the variable $\Pi$ in \eqref{sdp2} to be $\hat{\Pi}$.
Then, the optimization problem \eqref{sdp2} is linear in the remaining variables and can be solved effeciently.
We solve the problem \eqref{sdp2}, obtain optimal solution $H^\star_i, L^\star_i, \tilde{\Pi}^\star_i, \gamma^\star_i$ at the $i$th iteration and calculate $F^\star_i = L^\star_i (H^\star_i)^{-1}$.
Then, we solve the problem \eqref{sdp1} by fixing $F=F^\star_i, \tilde{\Pi}=\tilde{\Pi}^\star_{i}$, obtain optimal solution $P^\star_i, \Pi^\star_i, \gamma^\star_i$.
We iterate the above procedure until we reach the maximal iteration times.
Given the state $x_t$, the solution obtained from Algorithm~1 is denoted by $\gamma_{x_t}^\star, P_{x_t}^\star, F_{x_t}^\star, \Pi_{x_t}^\star, \tilde{\Pi}_{x_t}^\star$.
The computational complexity of the SDP problems \eqref{sdp1} and \eqref{sdp2} scales with the length of the dataset \eqref{data} and the dimensions of the state and input.

\begin{algorithm}[htb]\label{algorithm1}
\caption{Iteratively find the optimal solution}
\begin{algorithmic}[1]
\Statex \textbf{Input: }$x_t$, $\hat{\Pi}$
\Statex \textbf{Output: }$\gamma_{x_t}^\star=\gamma_i^\star, P_{x_t}^\star=P_i^\star, F_{x_t}^\star=F_i^\star, \Pi_{x_t}^\star=\Pi_i^\star, \tilde{\Pi}_{x_t}^\star=\tilde{\Pi}_i^\star$
\State Set $\Pi^\star_0=\hat{\Pi}$, $i=0$;
\For{$i\leq MaxIterations$}
\State Set $i=i+1$. Fix $\Pi=\Pi^\star_{i-1}$, solve the problem \eqref{sdp2} and obtain $H^\star_i, L^\star_i, \tilde{\Pi}^\star_i, \gamma^\star_i$, calculate $F^\star_i = L^\star_i (H^\star_i)^{-1}$;
\State Fix $F=F^\star_{i}, \tilde{\Pi}=\tilde{\Pi}^\star_{i}$, solve the problem \eqref{sdp1} and obtain $P^\star_i, \Pi^\star_i, \gamma^\star_i$;
\EndFor
\end{algorithmic}
\end{algorithm}

The data-driven min-max MPC problem is solved in a receding horizon fashion, see Algorithm~2.
At time $t$, the optimal state-feedback gain $F_{x_t}^\star$ of problem \eqref{sdp1} is obtained by applying Algorithm~1.
Only the first computed input $u_t=F_{x_t}^\star x_t$ is implemented.
At the next time $t+1$, we re-iterate the above procedure.
Recomputing the optimal state-feedback gain at each time step improves closed-loop performance, as illustrated in the numerical example in Section~\ref{sec4}.

\begin{algorithm}[htb]\label{algorithm2}
\caption{Data-driven min-max MPC}
\begin{algorithmic}[1]
\State At time $t=0$, measure the state $x_0$;
\State Compute $F_{x_t}^\star$ via Algorithm~1;
\State Apply the input $u_t=F_{x_t}^\star x_t$;
\State Set $t=t+1$, measure the state $x_t$ and go back to 2;
\end{algorithmic}
\end{algorithm}

\subsection{Closed-loop guarantees}\label{sec3.3}

In the following, we show that the reformulated SDPs are recursively feasible, the resulting closed-loop system is exponentially stable and the input and state constraints are satisfied.

\begin{mythm}
Suppose Assumption~\ref{assumption1} holds, $B_\omega\neq 0$ and the optimization problem \eqref{sdp1} and \eqref{sdp2} are feasible at time $t=0$. Then, i) they are feasible at any time $t\in\mathbb{N}$;
ii) the origin is exponentially stable for the closed-loop system.
iii) the closed-loop trajectory satisfies the input and state constraints.
\end{mythm}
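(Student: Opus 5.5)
The proof follows the standard argument of \cite{kothare1996robust}, adapted to the data-driven IQC setting through Lemma~\ref{lemma1}, Theorem~1 and Theorem~2. Let $(\gamma_t, P_t, F_t, \Pi_t, \tilde{\Pi}_t)$ denote the solution returned by Algorithm~1 at time $t$. The true pair $(A_s, B_s)$ lies in $\Sigma$ by Lemma~\ref{lemma1}, and the true $\Delta$ lies in $\boldsymbol{\Delta}$. Hence the actual closed-loop step $x_{t+1}=(A_s+B_sF_t)x_t+B_\omega\Delta(z_t)$ is one admissible realization of the predicted dynamics \eqref{closed_system} with $k=0$.

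\textbf{Recursive feasibility.} I would show that at time $t+1$ the shifted candidate $(P,F,\Pi,\tilde{\Pi})=(P_t,F_t,\Pi_t,\tilde{\Pi}_t)$, together with $\gamma=x_{t+1}^\top P_t x_{t+1}$, is feasible for \eqref{sdp1}. Constraints \eqref{sdp1.2} do not depend on the state, so they are inherited. Constraint \eqref{sdp1.1} holds by construction.

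The delicate point is \eqref{sdp1.3}, since $\gamma$ enters it. This requires $\gamma_{t+1}\le\gamma_t$, i.e., $x_{t+1}^\top P_tx_{t+1}\le x_t^\top P_t x_t\le\gamma_t$. Then $P_t\succeq\gamma_t S_x\succeq\gamma S_x$, and similarly for the input constraint.

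To get the decrease I would use the one-step inequality \eqref{proof2} at $k=0$. Evaluated on the true trajectory, it gives
\[
\|x_{t+1}\|_{P_t}^2 \le \|x_t\|_{P_t}^2-\ell(x_t,u_t)-q_\Pi(\omega_t,z_t)-q_M .
\]
The term $q_M\ge0$ holds pointwise by \eqref{thm1proof1}. The term $q_\Pi$, however, is only nonnegative in the summed sense of \eqref{proof3}, not at each step.

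This is the main obstacle. I would first try to argue that $\boldsymbol{\Pi}$ in the relevant cases (e.g.\ sector constraints as in \eqref{piuncertainty}) gives pointwise nonnegativity, since the IQC with $T=1$ applied to a single sample yields $q_\Pi(\omega_t,z_t)\ge0$. Indeed, Assumption~\ref{assumption1} holds for all $T$, including $T=1$, and $\Delta$ is static. So a static $\Delta$ satisfying \eqref{assumption1_equation} for $T=1$ gives the pointwise inequality directly. This resolves the obstacle.

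Feasibility of \eqref{sdp2} then follows from the equivalence in Theorem~2. Algorithm~1 only improves (or keeps) the cost when initialized appropriately, so the optimal value satisfies $\gamma_{t+1}^\star\le x_{t+1}^\top P_tx_{t+1}$.

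\textbf{Exponential stability.} I would take $V_t=x_t^\top P_t x_t$, or the optimal $\gamma_t$, as a Lyapunov function. From the inequality above, $V_{t+1}\le V_t-\ell(x_t,u_t)\le V_t-x_t^\top Q x_t$. I would use $Q\succ0$, or else the strict inequality in \eqref{sdp1.2}, which gives a margin $\epsilon\|x_t\|^2$.

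Uniform bounds $c_1\|x\|^2\le V\le c_2\|x\|^2$ follow from $P_t\succeq \epsilon I$ via \eqref{sdp1.2} and from boundedness of the optimal $P_t$, since the optimal $\gamma$ is nonincreasing. Together these give $V_{t+1}\le(1-\epsilon/c_2)V_t$ and hence exponential decay.

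\textbf{Constraint satisfaction.} From \eqref{sdp1.3} and $x_t^\top P_t x_t\le\gamma_t$ we get $\|x_t\|_{S_x}^2\le x_t^\top P_tx_t/\gamma_t\le1$. Likewise, $\|u_t\|_{S_u}^2=x_t^\top F_t^\top S_uF_tx_t\le1$.
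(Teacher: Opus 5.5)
Your proposal follows the paper's proof: the paper's only new step is the one-step decrease $\|x_{t+1}\|_{P_t}^2-\|x_t\|_{P_t}^2\le-\ell(u_t,x_t)$. It obtains this exactly as you do, by evaluating \eqref{sdp1.2} on the true closed-loop step, dropping the $M(\tilde\Pi_t)$-term via \eqref{thm1proof1} and the $\Pi_t$-term via the pointwise ($T=1$) instance of Assumption~\ref{assumption1}; recursive feasibility, stability and constraint satisfaction are then delegated to the standard shifted-candidate and \eqref{sdp1.3} arguments of \cite{xie2024minmax}, which you spell out.

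The one justification you should replace is ``$P_t$ is bounded because $\gamma_t$ is nonincreasing''. This does not follow, since monotonicity of $\gamma_t$ only controls $x_t^\top P_tx_t$. The quadratic upper bound needed for exponential decay comes instead from feasibility of the time-$0$ candidate $(x^\top P_0x,P_0,F_0,\Pi_0,\tilde\Pi_0)$ whenever $x^\top P_0x\le\gamma_0$; combined with $\gamma_t\le\gamma_0$, this yields $\gamma_t\le\lambda_{\max}(P_0)\|x_t\|^2$, provided $\gamma_t$ is the optimal value. For the lower bound you need $Q\succ0$, because the margin from the strict inequality in \eqref{sdp1.2} is not uniform in $t$.
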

\begin{pf}
The proof of recursive feasibility and exponential stability is similar to \cite[Theorem 3]{xie2024minmax} so we only state the main differences here.
Since \eqref{sdp1.2} holds for the optimal solution $P_{x_t}^\star, F_{x_t}^\star, \Pi_{x_t}^\star, \tilde{\Pi}_{x_t}^\star$, multiplying it from left and right by $\begin{bmatrix}x_t^\top &\omega_t^\top &\omega_{AB, t}^\top\end{bmatrix}^\top$ and its transpose, respectively, we obtain
\begin{equation}
    \begin{aligned}
    -\|x_t\|_{P_{x_t}^\star}^2+\ell(u_t, x_t)+\|x_{t+1}\|_{P_{x_{t}^\star}}^2\\+\begin{bmatrix}
    \omega_t\\
    z_t
    \end{bmatrix}^\top \!\!\!\!\Pi_{x_t}^\star\!\!\begin{bmatrix}
    \omega_t\\
    z_t
    \end{bmatrix}
    +\begin{bmatrix}
        \omega_{AB, t}\\
        z_{AB, t}
    \end{bmatrix}^\top \!\!\!\!M(\tilde{\Pi}_{x_t}^\star)\!\!
    \begin{bmatrix}
        \omega_{AB, t}\\
        z_{AB, t}
    \end{bmatrix}\!\!\leq\! 0.
\end{aligned}
\end{equation}
Using the Assumption~\ref{assumption1} and the same arguments as in \eqref{proof4}-\eqref{thm1proof1}, we have
\begin{equation}\label{thm3proof1}
    \|x_{t+1}\|_{P_{x_{t}^\star}}^2-\|x_t\|_{P_{x_t}^\star}^2\leq -\ell(u_t, x_t).
\end{equation}
Then, we can use \eqref{thm3proof1} to show recursive feasibility and exponential stability.
The proof of constraint satisfaction follows similar arguments to those in \cite[Theorem 2]{xie2024minmax} and is therefore omitted for brevity. The key idea is to show that if constraints \eqref{sdp1.3} or \eqref{sdp2.3} are satisfied, then the corresponding input and state constraints are guaranteed to hold.
$\hfill\qed$
\end{pf}

The theoretical results require that the problem \eqref{sdp1} and \eqref{sdp2} are feasible at time $t=0$.
This requirement implies that the available data \eqref{data} must be sufficiently informative to characterize an uncertainty set $\Sigma$ that admits a common stabilizing controller.
In this framework, informativity is tied to whether the data and the structural knowledge provided by the IQC multipliers adequately restricts the set of consistent systems.
Increasing the length of the data sequence $T_f$ generally shrinks the set $\Sigma$.
Besides, the initial choice of the multiplier $\hat{\Pi}$ in Algorithm 1 is important for the feasibility of the problem~\eqref{sdp2}. 


\section{Simulation}\label{sec4}

In this section, we apply the proposed data-driven min-max MPC scheme with IQC to the following nonlinear system
\begin{equation}
\begin{aligned}
    x_{t+1}&=\begin{bmatrix}
        1.1 &0.1\\
        0.1 &0.8
    \end{bmatrix}x_t+
    \begin{bmatrix}
        1\\ 0.5
    \end{bmatrix}u_t+\begin{bmatrix}
        1\\1
    \end{bmatrix}
    \omega_t,\\
    z_t&=\begin{bmatrix}1 &0\end{bmatrix}x_t,
    \omega_t=\Delta(z_t),
\end{aligned}
\end{equation}
where $\Delta(z)=0.01\sin(z)$ satisfies $\Delta(0)=0$ as well as the sector constraint $(\Delta(z)-\alpha z)(\beta z-\Delta(z))\geq 0$ for all $z\in\mathbb{R}$ with $\alpha=-0.01, \beta=0.01$.
Then, the IQC \eqref{assumption1_equation} holds for $\Delta$ with \eqref{piuncertainty}.
The system matrices $A_s, B_s$ and the function $\Delta$ are unknown, but an input-state trajectory of length $T_f=20$ is available.
The sequence input-state data is generated by choosing the input uniformly from $[-1,1]$.
Besides, we assume that $\boldsymbol{\Pi}$ is known.
We aim to design a data-driven min-max MPC controller that stabilizes the system using the input-state trajectory.
The cost weighting matrices are chosen as $Q=0.01$ and $R=0.01$.
The input and state constraint matrices are chosen as $S_u=16$ and $S_x=0.1 I$.
The initial state is $x_0=[0.5,0.5]$.

\begin{figure}
    \centering
    \includegraphics[width=0.38\textwidth]{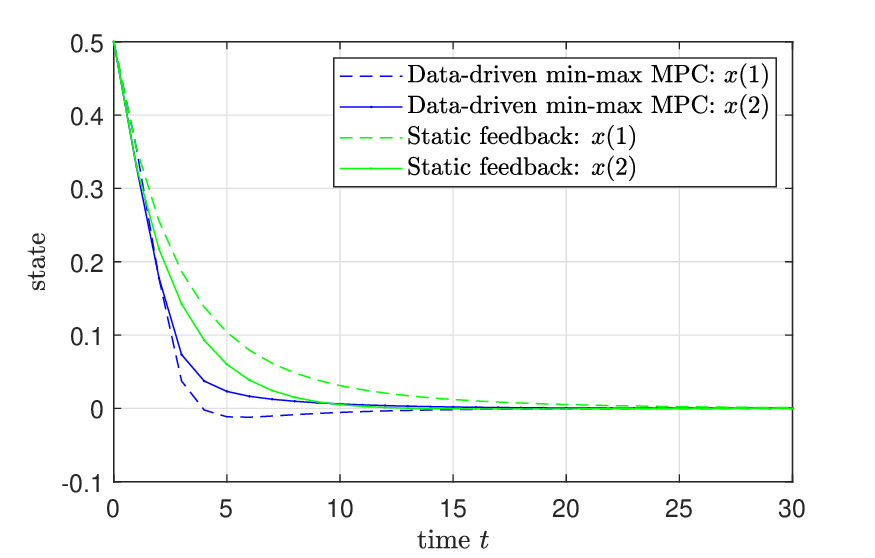}
    \includegraphics[width=0.38\textwidth]{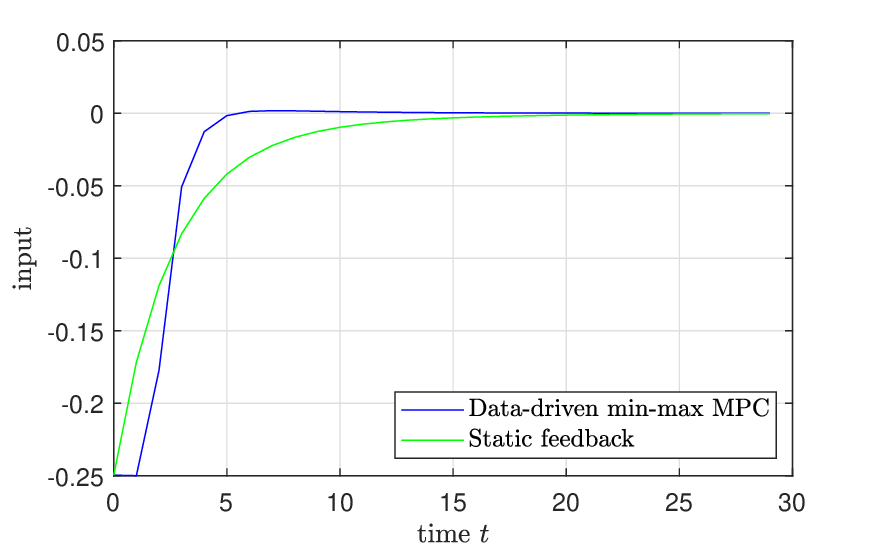}
    \caption{Closed loop state and input trajectories under the proposed data-driven min-max MPC scheme and the static state-feedback control.}
    \label{pic:compare}
\end{figure}

Figure~1 illustrates the closed-loop input and state trajectories resulting from the proposed data-driven min-max MPC scheme and a static state-feedback controller.
The static state-feedback gain is computed at the initial time $t=0$ of the data-driven min-max MPC scheme.
As shown in the figures, the input and state constraints are satisfied in closed-loop operation, i.e., the input remains above $0.25$. 
The proposed data-driven min–max MPC approach drives both the input and state trajectories to the origin more rapidly than the static state-feedback controller.
The closed-loop cost of the data-driven min-max MPC scheme is $10.8\%$ smaller than that of the static state-feedback control law, which shows that the performance is improved by recomputing the control law.

\section{Conclusion}\label{sec5}
In this paper, we proposed a data-driven min-max MPC scheme for unknown discrete-time uncertain system with IQCs.
The main contribution is to reformulate the data-driven min-max MPC to SDPs and propose an algorithm that iteratively solve two SDPs, which gives a tractable solution and yields an optimal state-feedback gain using data.
A receding-horizon algorithm is proposed to
repeatedly solve the SDPs at each time step and update the state-feedback gain online.
The resulting closed-loop system is proven to be exponentially stable and satisfy both input and state constraints.
Numerical example shows that the data-driven min-max MPC scheme is effective and improves closed-loop performance by recomputing the state-feedback gain.
In future work, we plan to extend the scheme to more general classes of IQC to handle a broader range of nonlinear dynamics.

\bibliography{ifacconf}        
\end{document}